\theoremstyle{thmstyleone}%
\newtheorem{theorem}{Theorem}%  meant for continuous numbers
\theoremstyle{thmstyletwo}%
\theoremstyle{thmstylethree}%
\begin{document}

\title[Grand Angular Momentum]{Classical Grand Angular Momentum in N-Body Problems}
%%=============================================================%%
%% GivenName	-> \fnm{Joergen W.}
%% Particle	-> \spfx{van der} -> surname prefix
%% FamilyName	-> \sur{Ploeg}
%% Suffix	-> \sfx{IV}
%% \author*[1,2]{\fnm{Joergen W.} \spfx{van der} \sur{Ploeg} 
%%  \sfx{IV}}\email{iauthor@gmail.com}
%%=============================================================%%

\author*[1,2]{\fnm{Zhongqi} \sur{Liang}}\email{zhongqi.liang@stonybrook.edu}

\author[1,3]{\fnm{Jes\'us} \sur{P\'erez-R\'ios}}\email{jesus.perezrios@stonybrook.edu}

\affil[1]{\orgdiv{Department of Physics and Astronomy}, \orgname{Stony Brook University}, \orgaddress{\city{Stony Brook}, \postcode{11790}, \state{NY}, \country{US}}}

%%==================================%%
%% Sample for unstructured abstract %%
%%==================================%%

\abstract{The concept of grand angular momentum is widely used in the study of N-body problems quantum mechanically. Here, we applied it to a classical analysis of N-body problems. Utilizing the tree representation for Jacobi and hyperspherical coordinates, we found a decomposition of its magnitude into magnitudes of one-body angular momenta in three dimensions. We generalized some results from the two-body case and derived a general expression for the scattering angle in N-body problems.}

\keywords{N-Body Problem, Classical Mechanics, Angular Momentum, Scattering}

%%\pacs[JEL Classification]{D8, H51}

%%\pacs[MSC Classification]{35A01, 65L10, 65L12, 65L20, 65L70}

\maketitle
\section{Introduction}\label{sec1}

    The N-body problem concerns the time evolution of an isolated system consisting of $N$ objects in a three-dimensional space with only internal interactions that could be modeled by a potential $V$ dependent solely on the distances between them. It is ubiquitous in physics -- first studied in large-scale astronomical contexts as many problems in physics by illustrious names such as Euler and Lagrange \cite{Bann2022, Arnold2006} and later, especially after the advances of quantum mechanics, in relevant scenarios from atomic and nuclear physics ~\cite{Thomas,Faddeev1965,Greene2017,PR2014,PR2020,Xiao2013, Xiao2015,Lomb2007,Aqui1986,Aqui1987,EE2018}. At the same time, it is also well known that an exact solution does not exist for $N > 2$ except for a few special cases. Nevertheless, one could still greatly simplify the general problem by making appropriate coordinate transformations and applying conservation laws for any $N$. 

    One set of such coordinates is the Jacobi coordinates, which, after enforcing momentum conservation and neglecting the trivial center of mass motion, can reduce it to an effective $(N-1)$-body problem. The various ways of forming these virtual bodies could be graphically represented by binary Jacobi trees, \cite{Lim1989, Lim1991, Lim2020} which will be introduced in Part \ref{Jacobi}. By a mass-weighted rescaling of the coordinates associated with each of the $(N-1)$ bodies, it can be further reduced to a problem of one virtual body moving in $3N - 3$ dimensions~\cite{Smith1960,PR2014,Greene2017,PR2020}. This higher-dimensional space is usually described with hyperspherical coordinates, first introduced in nuclear physics, ~\cite{Hyperspherical_first} and showing interesting results in the quantum realm, such as the existence of Efimov states~\cite{Efimov1970}. There are also different definitions of hyperspherical coordinates, which can again be represented by binary trees. The details will be laid out in Part \ref{hpsph}. Moreover, the grand angular momentum, as an extension of the concept of angular momentum to higher dimensions, is a conserved magnitude in many scattering scenarios and, as a result, has well-defined eigenvalues and associated eigenfunctions at the quantum level~\cite{Avery2018, Aqui1986, Smith1960}. In fact, the concept of grand angular momentum has been more thoroughly studied in a quantum mechanical context, while its applications in the framework of classical mechanics have received less attention.

    Here, in Part \ref{grandL}, we bridge this gap with the help of graphical representations of the coordinates. We prove that one can read directly from a tree an expression for the magnitude of grand angular momentum. Moreover, equipped with this knowledge, a decomposition of the grand angular momentum into regular angular momentum in their magnitudes can be achieved, and a second-order differential equation that constrains the evolution of the $3N - 4$ hyperangles is found.

\section{Jacobi Trees} \label{Jacobi}

    For the two-body problem in a three-dimensional space, the only setup that emits a general analytical solution of particle trajectories within the N-body context, the Hamiltonian is given by
    \begin{equation}
    \label{eq1}
        H = \frac{\boldsymbol{p}_1^2}{2m_1} + \frac{\boldsymbol{p}_2^2}{2m_2} + V(\boldsymbol{r}_2 - \boldsymbol{r}_1),
    \end{equation}
    which describes a system made up of masses $m_1$ and $m_2$ at positions $\boldsymbol{r}_1$ and $\boldsymbol{r}_2$ with conjugate momenta $\boldsymbol{p}_1$ and $\boldsymbol{p}_2$, respectively, interacting via the interaction potential $V(\boldsymbol{r}_2 - \boldsymbol{r}_1)$. Using the following coordinate transformation
    \begin{eqnarray}
        \boldsymbol{R}_{\text{CM}} &=& \boldsymbol{R}_{12} = \frac{m_1 \boldsymbol{r}_1 + m_2 \boldsymbol{r}_2}{m_1 + m_2} \\
        \boldsymbol{\rho} &=& \boldsymbol{\rho}_1 = \boldsymbol{r}_2 - \boldsymbol{r}_1,
    \end{eqnarray}
    the Hamiltonian~(\ref{eq1}) reads
    \begin{equation}
        H = \frac{\boldsymbol{P}^2}{2\mu} + \frac{\boldsymbol{P}_{\text{CM}}^2}{2M} + V(\boldsymbol{\rho}),
    \end{equation}
    where $\mu = \mu_{1,2} = m_1 m_2 / (m_1 + m_2)$ is the two-body reduced mass and  $M = M_{12} = m_1 + m_2$ is the total mass of the system. Due to the nature of the interaction potential, $\boldsymbol{R}_{\text{CM}} = \boldsymbol{R}_{12}$, does not appear in the interaction potential. Hence, the center of mass momentum, $\boldsymbol{P}_{\text{CM}}$, is a conserved quantity. Therefore, the two-body problem reduces to a single particle with mass $\mu = \mu_{1,2}$, placed at $\boldsymbol{\rho} = \boldsymbol{\rho}_1$ with momentum $\boldsymbol{P} = \mu \dot{\boldsymbol{\rho}}$. 

    For $N=3$, after neglecting the center of mass motion, two virtual bodies are required to specify the dynamics. The first one describes the relative motion between $m_1$ and $m_2$, the same as in the case of the two-body problem. The second is associated with $\boldsymbol{\rho}_2 = \boldsymbol{r}_3 - \boldsymbol{R}_{12} $, joining the center of mass of the first two particles with the position of the third. In these new coordinates, the Hamiltonian reads \begin{align}
        H &= \frac{\boldsymbol{p}_1^2}{2m_1} + \frac{\boldsymbol{p}_2^2}{2m_2} + \frac{\boldsymbol{p}_3^2}{2m_1} + V(\boldsymbol{r}_2- \boldsymbol{r}_1, \boldsymbol{r}_3- \boldsymbol{r}_2,\boldsymbol{r}_1- \boldsymbol{r}_3) \\
        &= \frac{\boldsymbol{P}_1^2}{2\mu_{1,2}} + \frac{\boldsymbol{P}_2^2}{2\mu_{12,3}} + \frac{\boldsymbol{P}_{CM}^2}{2M} + V(\boldsymbol{\rho}_1, \boldsymbol{\rho}_2),
    \end{align}
    with
    \begin{eqnarray}
     \mu_{12,3} = \frac{(m_1 + m_2) m_3}{(m_1 + m_2) + m_3}, 
    \end{eqnarray}
    and $M = M_{123} = m_1 + m_2 + m_3$. 
   
    It is possible to generalize this approach to any $N$, building Jacobi coordinates recursively as
    \begin{eqnarray} 
    \label{2}
        M_{12...j} = \sum_{i=1}^j m_i, \\
        \mu_{12...j-1,j} = \frac{M_{12...j-1} m_j}{M_{12...j}}, \\
        \boldsymbol{R}_{12...j} = \frac{\sum_{i=1}^j m_i \boldsymbol{r}_i}{M_{12...j}}, \\ \boldsymbol{\rho}_j = \boldsymbol{r}_{j+1} - \boldsymbol{R}_{12...j},
    \end{eqnarray}
    for $j = 1, 2, ..., N - 1$. The Hamiltonian is given by
    \begin{equation}
        H = \sum_{i=1}^{N-1} \frac{\boldsymbol{P}_i^2}{2\mu_{12...i,i+1}} + V(\boldsymbol{\rho}_1,...,\boldsymbol{\rho}_{N-1}),
    \end{equation}
    after removing the center-of-mass term. If the system contains an infinitely massive object $m_1$, then the Jacobi coordinates constructed as such simply remove the degrees of freedom associated with $m_1$ and the $N-1$ bodies concerned are the remaining $N-1$ masses with $\boldsymbol{r}_j - \boldsymbol{r}_1, j=2,...,N$ as their coordinates. 
    
    Further, by introducing the N-body reduced mass~\cite{Smith1960} 
    \begin{equation} \label{1}
        \mu \equiv \left(\frac{\prod_{i=1}^N m_i}{\sum_{i=1}^N m_i}\right)^{1/(N-1)},
    \end{equation}    
    it is possible to define mass-weighted Jacobi coordinates as 
    \begin{equation} \label{3}
        \boldsymbol{\rho}_{\text{MW},i} = \sqrt{\frac{\mu_{12...i,i+1}}{\mu}} \boldsymbol{\rho}_i. 
    \end{equation}
    Next, stacking them on top of each other, namely
    \begin{equation}
        \boldsymbol{\rho}_{\text{MW}} \equiv \begin{pmatrix}
            \boldsymbol{\rho}_{\text{MW},1} \\ \boldsymbol{\rho}_{\text{MW},2} \\ . \\ . \\ . \\\boldsymbol{\rho}_{\text{MW},N-1}
        \end{pmatrix},
    \end{equation}
    the Hamiltonian is reduced to
    \begin{equation} \label{4}
        H = \frac{\boldsymbol{P}_{\text{MW}}^2}{2 \mu} + V(\boldsymbol{\rho}_{\text{MW}}), 
    \end{equation}
    with $\boldsymbol{\rho}_{\text{MW}}$ and its conjugate momentum $\boldsymbol{P}_{\text{MW}} = \mu \boldsymbol{\dot{\rho}}_{\text{MW}}$ vectors in $3N - 3$ dimensions. The degrees of freedom are thus repackaged and the system can be interpreted as one body $\mu$ at $\boldsymbol{\rho}_{\text{MW}}$ moving with momentum $\boldsymbol{P}_{\text{MW}}$ in a $(3N-3)$-dimensional space. \cite{PR2020, Greene2017}

    However, as $N$ increases, there also appear more ways of forming such $N - 1$ virtual bodies. For $N = 2$, there is no ambiguity. For $N = 3$, it is still unique up to labeling. When $N = 4$, in addition to the recursive way presented above, it is also permitted to have three virtual bodies as $\mu_{1,2}$, $\mu_{3,4}$, and $\mu_{12,34} = (m_1 + m_2)(m_3 + m_4)/M$, as shown in Fig.~\ref{fig:4-body}. Each particular choice of Jacobi coordinates has attached to it a binary Jacobi tree, as presented in panels (b) and (d) of Fig.~\ref{fig:4-body}. 

    \begin{figure}[b]
        \centering
        \subfigure[]{
            \scalebox{0.8}{
                \begin{tikzpicture}
                    [real/.style={circle,draw=black,fill=black,thick,inner sep=2pt},
                    virtual/.style={circle,draw=black,thick,inner sep=2pt}]
                    \node (m1) at (-2.5, 0.5) [real] [label=above:$m_1$]{};
                    \node (m2) at (0.5, -2.5) [real] [label=below:$m_2$] {};
                    \node (m3) at (3, -0.5) [real] [label=below:$m_3$] {};
                    \node (m4) at (3, 2) [real] [label=above:$m_4$] {};
                    \node (mu12) at (-0.5, -1.5) [virtual] [label=210:$\mu_{1,2}$] {};
                    \node (mu123) at (9/10, -11/10) [virtual] [label=300:$\mu_{12,3}$] {};
                    \node (mu1234) at (27/16, 1/16) [virtual] [label=135:$\mu_{123,4}$] {};
                    \draw (m1) -- (mu12) [thick];
                    \draw [->] (mu12) -- (m2) [thick];
                    \draw (mu12) -- (mu123) [thick];
                    \draw [->] (mu123) -- (m3) [thick];
                    \draw (mu123) -- (mu1234) [thick];
                    \draw [->] (mu1234) -- (m4) [thick];
                \end{tikzpicture}
            }
        }
        \hspace{1cm}
        \subfigure[]{
            \scalebox{0.8}{
                \begin{istgame}
                    \setistgrowdirection'{north}
                    \setistNullNodeStyle{0pt}
                    \setistHollowNodeStyle{5pt}
                    \setistSolidNodeStyle{5pt}
                    \xtShowTerminalNodes
                    \xtdistance{10mm}{20mm}
                    \istroot(0)[hollow node]<[yshift=-20pt]>{$\mu_{123,4}$}
                        \istb \istb
                    \endist
                    \istroot(1)(0-1)[hollow node]<[yshift=-15pt, xshift=-15pt]>{$\mu_{12,3}$}
                        \istb \istb
                    \endist
                    \istroot(2)(0-2)[null node]{}
                        \istbm \istb
                    \endist
                    \istroot(11)(1-1)[hollow node]<[yshift=-15pt, xshift=-15pt]>{$\mu_{1,2}$}
                        \istbt{}[al]{m_1} \istbt{}[al]{m_2}
                    \endist
                    \istroot(12)(1-2)[null node]{}
                        \istbm \istbt{}[al]{m_3}
                    \endist
                    \istroot(22)(2-2)[null node]{}
                        \istbm \istbt{}[al]{m_4}
                    \endist
                \end{istgame}
            }
        }
        \vfill
        \subfigure[]{
            \scalebox{0.8}{
                \begin{tikzpicture}
                    [real/.style={circle,draw=black,fill=black,thick,inner sep=2pt},
                    virtual/.style={circle,draw=black,thick,inner sep=2pt}]
                    \node (m1) at (-2.5, 0.5) [real] [label=above:$m_1$]{};
                    \node (m2) at (0.5, -2.5) [real] [label=below:$m_2$] {};
                    \node (m3) at (3, -0.5) [real] [label=below:$m_3$] {};
                    \node (m4) at (3, 2) [real] [label=above:$m_4$] {};
                    \node (mu12) at (-0.5, -1.5) [virtual] [label=210:$\mu_{1,2}$] {};
                    \node (mu34) at (3, 1) [virtual] [label=0:$\mu_{3,4}$] {};
                    \node (mu1234) at (27/16, 1/16) [virtual] [label=120:$\mu_{12,34}$] {};
                    \draw (m1) -- (mu12) [thick];
                    \draw [->] (mu12) -- (m2) [thick];
                    \draw (mu12) -- (mu1234) [thick];
                    \draw [->] (mu1234) -- (mu34) [thick];
                    \draw (m3) -- (mu34) [thick];
                    \draw [->] (mu34) -- (m4) [thick];
                \end{tikzpicture}
            }
        }
        \hspace{1cm}
        \subfigure[]{
            \scalebox{0.8}{
                \begin{istgame}
                    \setistgrowdirection'{north}
                    \setistNullNodeStyle{0pt}
                    \setistHollowNodeStyle{5pt}
                    \setistSolidNodeStyle{5pt}
                    \xtShowTerminalNodes
                    \xtdistance{10mm}{20mm}
                    \istroot(0)[hollow node]<[yshift=-20pt]>{$\mu_{12,34}$}
                        \istb \istb
                    \endist
                    \istroot(1)(0-1)[null node]{}
                        \istb \istbm
                    \endist
                    \istroot(2)(0-2)[null node]{}
                        \istbm \istb
                    \endist
                    \istroot(11)(1-1)[hollow node]<[yshift=-15pt, xshift=-15pt]>{$\mu_{1,2}$}
                        \istbt{}[al]{m_1} \istbt{}[al]{m_2}
                    \endist
                    \istroot(22)(2-2)[hollow node]<[yshift=-15pt, xshift=15pt]>{$\mu_{3,4}$}
                        \istbt{}[al]{m_3} \istbt{}[al]{m_4}
                    \endist
                \end{istgame}
            }
        }
        \caption{Visualization of Jacobi coordinates (a) and (c), together with their respective tree representations (b) and (d) for the 4-body problem. Solid circles represent physical bodies. Hollow circles are for virtual bodies. Unweighted distance vectors are the arrowed lines.}
        \label{fig:4-body}
    \end{figure}
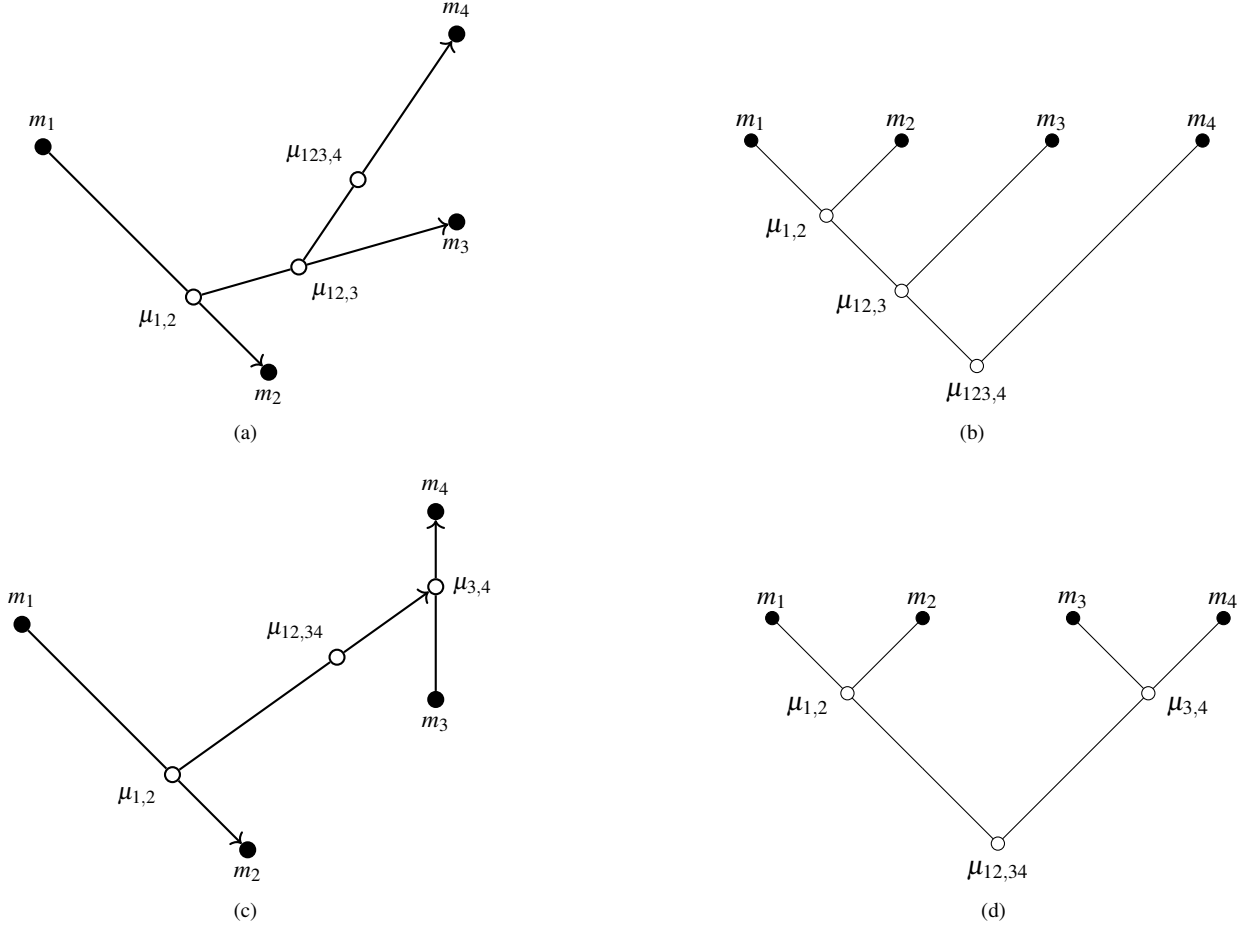

    The correspondence between the coordinates and their tree representation is straightforward. Each leaf represents a physical body and each node a virtual body of mass $\mu_{L,R}$, where $L = \{l_1, l_2, ..., l_m\}$ and $R = \{r_1, r_2, ..., r_n\}$ are collective indices of $m$ physical masses joining the node from the left and $n$ from the right, respectively, and the $\boldsymbol{\rho}_{L,R}$ associated is given by $\boldsymbol{R}_{R} - \boldsymbol{R}_{L}$, the vector from the center of mass of all masses on the left to that of all masses on the right. More explicitly, for each node, 
    \begin{eqnarray} \label{5}
        M_L = M_{l_1 l_2...l_m} = \sum_{i=1}^m m_{l_i}, \\
        \mu_{L,R} = \frac{M_L M_R}{M_L + M_R}, \\
        R_L = R_{l_1 l_2...l_m} = \frac{\sum_{i=1}^m m_{l_i} \boldsymbol{r}_{l_i}}{M_{l_1 l_2...l_m}}, \\
        \boldsymbol{\rho}_{L,R} = \boldsymbol{R}_R - \boldsymbol{R}_L,
    \end{eqnarray}
    could be defined, and there exist $N - 1$ such nodes \cite{Lim1989, Lim1991}. By the same procedures described by Eq.~\ref{3}, again a one-body problem in $3N - 3$ dimensions is at hand.

    Beyond this point in all later sections, the ``MW" subscript will be dropped but implicitly assumed so that every vector, unless otherwise noted, will be mass-weighted.

    \section{Hyperspherical Trees} \label{hpsph}

    After converting an N-body problem in 3 dimensions to an effective one-body problem in $3N - 3$ dimensions, hyperspherical coordinates with one hyperradius and $3N - 4$ hyperangles are then applied to describe the system to better incorporate rotational symmetries. There are different ways of defining the hyperangles and they could again be represented by binary trees graphically. There will be $3N - 3$ leaves, corresponding to $3N - 3$ Cartesian components, and $3N - 4$ nodes, to which accord the hyperangles. By convention, a line joining a node $\gamma_i$ from the left represents a multiplicative factor of $\cos \gamma_i$ and that from the right $\sin \gamma_i$. Then the definition of each Cartesian component could be read directly from the tree representation by simply tracing from its corresponding leaf to the root. The range of each hyperangle ought to be specified as well - $[0, 2\pi)$ if its branches contain no more nodes, $[0, \pi]$ if there are further nodes attaching to one of the branches, and $[0, \pi/2]$ if both branches contain further nodes \cite{Aqui1986, Aqui2004}. Two different trees associated with $N=3$ are displayed in Fig.~\ref{fig:tree_1}. The hyperspherical coordinates are given by 

    \begin{figure}[H]
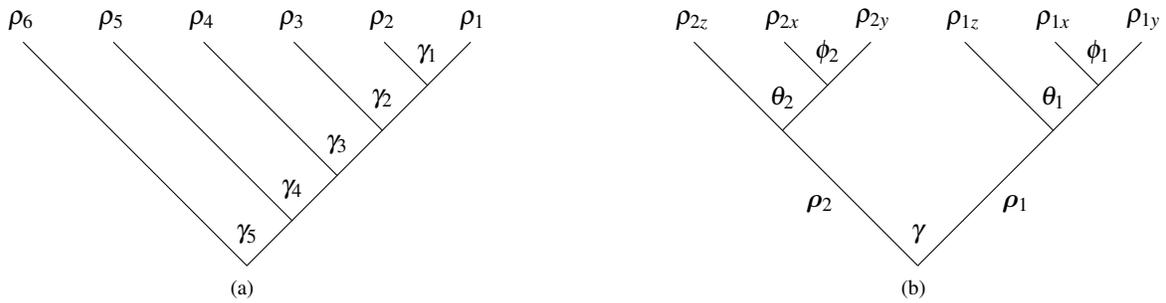

        \centering
        \subfigure[]{
            \begin{istgame}
                \setistgrowdirection'{north}
                \setistNullNodeStyle{0pt}
                \xtdistance{6mm}{12mm}
                \istroot(0)[null node]<[yshift=5pt]>{$\gamma_5$}
                    \istb \istb
                \endist
                \istroot(1)(0-1)[null node]{}
                    \istb \istbm
                \endist
                \istroot(2)(0-2)[null node]<[yshift=5pt]>{$\gamma_4$}
                    \istb \istb
                \endist
                \istroot(11)(1-1)[null node]{}
                    \istb \istbm
                \endist
                \istroot(21)(2-1)[null node]{}
                    \istb \istbm
                \endist
                \istroot(22)(2-2)[null node]<[yshift=5pt]>{$\gamma_3$}
                    \istb \istb
                \endist
                \istroot(111)(11-1)[null node]{}
                    \istb \istbm
                \endist
                \istroot(211)(21-1)[null node]{}
                    \istb \istbm
                \endist
                \istroot(221)(22-1)[null node]{}
                    \istb \istbm
                \endist
                \istroot(222)(22-2)[null node]<[yshift=5pt]>{$\gamma_2$}
                    \istb \istb
                \endist
                \istroot(1111)(111-1)[null node]{}
                    \istb{}[al]{\rho_6} \istbm
                \endist
                \istroot(2111)(211-1)[null node]{}
                    \istb{}[al]{\rho_5} \istbm
                \endist
                \istroot(2211)(221-1)[null node]{}
                    \istb{}[al]{\rho_4} \istbm
                \endist
                \istroot(2221)(222-1)[null node]{}
                    \istb{}[al]{\rho_3} \istbm
                \endist
                \istroot(2222)(222-2)[null node]<[yshift=5pt]>{$\gamma_1$}
                    \istb{}[al]{\rho_2} \istb{}[al]{\rho_1}
                \endist
            \end{istgame}
        }
        \hspace{2cm}
        \subfigure[]{
            \begin{istgame}
                \setistgrowdirection'{north}
                \setistNullNodeStyle{0pt}
                \xtdistance{6mm}{12mm}
                \istroot(0)[null node]<[yshift=5pt]>{$\gamma$}
                    \istb \istb
                \endist
                \istroot(1)(0-1)[null node]<[xshift=-20pt]>{$\boldsymbol{\rho}_2$}
                    \istb \istbm
                \endist
                \istroot(2)(0-2)[null node]<[xshift=20pt]>{$\boldsymbol{\rho}_1$}
                    \istbm \istb
                \endist
                \istroot(11)(1-1)[null node]{}
                    \istb \istbm
                \endist
                \istroot(22)(2-2)[null node]{}
                    \istbm \istb
                \endist
                \istroot(111)(11-1)[null node]<[yshift=5pt]>{$\theta_2$}
                    \istb \istb
                \endist
                \istroot(222)(22-2)[null node]<[yshift=5pt]>{$\theta_1$}
                    \istb \istb
                \endist
                \istroot(1111)(111-1)[null node]{}
                    \istb{}[al]{\rho_{2z}} \istbm
                \endist
                \istroot(1112)(111-2)[null node]<[yshift=5pt]>{$\phi_2$}
                    \istb{}[al]{\rho_{2x}} \istb{}[al]{\rho_{2y}}
                \endist
                \istroot(2221)(222-1)[null node]{}
                    \istb{}[al]{\rho_{1z}} \istbm
                \endist
                \istroot(2222)(222-2)[null node]<[yshift=5pt]>{$\phi_1$}
                    \istb{}[al]{\rho_{1x}} \istb{}[al]{\rho_{1y}}
                \endist
            \end{istgame}
        }
        \caption{Two definitions of hyperangles for $N = 3$. (a) is a straightforward extension of spherical coordinates, and (b) is defined such that $\boldsymbol{\rho}_1$ and $\boldsymbol{\rho}_2$ are very much living in their own 3-dimensional space.}
        \label{fig:tree_1}
    \end{figure}

    \begin{equation}
        \boldsymbol{\rho}_a = \begin{pmatrix}
            \rho_1 \\ \rho_2 \\ \rho_3 \\ \rho_4 \\ \rho_5 \\ \rho_6
        \end{pmatrix} = \begin{pmatrix}
            \rho \sin \gamma_1 \sin \gamma_2 \sin \gamma_3 \sin \gamma_4 \sin \gamma_5 \\
            \rho \cos \gamma_1 \sin \gamma_2 \sin \gamma_3 \sin \gamma_4 \sin \gamma_5 \\
            \rho \cos \gamma_2 \sin \gamma_3 \sin \gamma_4 \sin \gamma_5 \\
            \rho \cos \gamma_3 \sin \gamma_4 \sin \gamma_5 \\
            \rho \cos \gamma_4 \sin \gamma_5 \\
            \rho \cos \gamma_5
        \end{pmatrix},
        \end{equation}
    for the tree in panel (a), whereas the tree of panel (b) corresponds to
        \begin{equation}
        \boldsymbol{\rho}_b = \begin{pmatrix}
            \boldsymbol{\rho}_1 \\ \boldsymbol{\rho}_2
        \end{pmatrix} = \begin{pmatrix}
            \rho_{1,y} \\ \rho_{1,x} \\ \rho_{1,z} \\ \rho_{2,y} \\ \rho_{2,x} \\ \rho_{2,z}
        \end{pmatrix} = \begin{pmatrix}
            \rho \sin \gamma \sin \theta_1 \sin \phi_1 \\
            \rho \sin \gamma \sin \theta_1 \cos \phi_1 \\
            \rho \sin \gamma \cos \theta_1 \\
            \rho \cos \gamma \sin \theta_2 \sin \phi_2 \\
            \rho \cos \gamma \sin \theta_2 \cos \phi_2 \\
            \rho \cos \gamma \cos \theta_2
        \end{pmatrix}.
    \end{equation}

    Furthermore, it is worth noting that the tree shown in panel (b) of Fig. \ref{fig:tree_1} could be obtained by attaching a ``spherical fork" - the tree representation for regular spherical coordinates - to every virtual node of a Jacobi tree after removing the physical leaves for $N = 3$. Such trees can be constructed for any $N$ and have an intuitive and physical meaning of characterizing the motion of each virtual body in 3-dimensional spherical coordinates and then {\it joining} them as projections of higher-dimensional vectors at each node. Its usefulness will be further demonstrated in Part \ref{L}.

\section{Grand Angular Momentum} \label{grandL}

We have discussed the usefulness of Jacobi coordinates and their representations in hyperspherical coordinates in reducing the N-body setup to a single virtual body in a $(3N-3)$-dimensional space. Next, we study the generalization of angular momentum to the N-body problem. First, the original definition of angular momentum needs to be extended to higher dimensions and, therefore, become a tensor. It is called the grand angular momentum tensor in the literature and is defined as
    \begin{equation} 
        \boldsymbol{\Lambda} = \boldsymbol{\rho} \wedge \boldsymbol{P}. 
    \end{equation}
and
     \begin{equation}
        \Lambda_{ij} = \rho_i P_j - \rho_j P_i. 
    \end{equation}
component-wise. Its magnitude is given by
     \begin{equation}
        \Lambda^2 = \frac{1}{2} \sum_{i,j} (\Lambda_{ij})^2.
    \end{equation}

   In hyperspherical coordinates, $\boldsymbol{\rho} = \rho \boldsymbol{\hat{\rho}}$, the momentum is then 
    \begin{equation}
        \boldsymbol{P} = \mu \boldsymbol{\dot{\rho}}=\mu (\dot{\rho} \boldsymbol{\hat{\rho}} + \rho \boldsymbol{\dot{\hat{\rho}}}),
    \end{equation}
    and $\Lambda^2$ can be calculated with the Lagrange identity as follows
    \begin{align} \label{gam}
        \Lambda^2 &= (\boldsymbol{\rho} \wedge \mu {\dot{\boldsymbol{\rho}}})^2 = \mu^2 (\rho \boldsymbol{\hat{\rho}} \wedge (\dot{\rho} \boldsymbol{\hat{\rho}} + \rho \boldsymbol{\dot{\hat{\rho}}}))^2 = \mu^2 \rho^4 (\boldsymbol{\hat{\rho}} \wedge \boldsymbol{\dot{\hat{\rho}}})^2 \nonumber \\
        &= \mu^2 \rho^4 [(\boldsymbol{\hat{\rho}} \cdot \boldsymbol{\hat{\rho}}) (\boldsymbol{\dot{\hat{\rho}}} \cdot \boldsymbol{\dot{\hat{\rho}}}) - (\boldsymbol{\hat{\rho}} \cdot \boldsymbol{\dot{\hat{\rho}}})^2] = \mu^2 \rho^4 \boldsymbol{\dot{\hat{\rho}}} \cdot \boldsymbol{\dot{\hat{\rho}}}.
    \end{align}    
%    \begin{equation}
%        \Lambda^2 = (\boldsymbol{\rho} \wedge \mu {\dot{\boldsymbol{\rho}}})^2 = \mu^2 \rho^4 (\boldsymbol{\hat{\rho}} \wedge \boldsymbol{\dot{\hat{\rho}}})^2 = \mu^2 \rho^4 [(\boldsymbol{\hat{\rho}} \cdot \boldsymbol{\hat{\rho}}) (\boldsymbol{\dot{\hat{\rho}}} \cdot \boldsymbol{\dot{\hat{\rho}}}) - (\boldsymbol{\hat{\rho}} \cdot \boldsymbol{\dot{\hat{\rho}}})^2] = \mu^2 \rho^4 \boldsymbol{\dot{\hat{\rho}}} \cdot \boldsymbol{\dot{\hat{\rho}}}.
%    \end{equation}
    It could be observed that the dependence of $\Lambda^2$ on $\rho$ and the hyperangles are separated, showcasing the advantage of hyperspherical formulation. One can also see that it recovers the two-body result where $\boldsymbol{\dot{\hat{\rho}}} = \dot{\theta}$ and $L = \mu \rho^2 \dot{\theta}^2$.

    Further, the exact dependence of $\boldsymbol{\dot{\hat{\rho}}} \cdot \boldsymbol{\dot{\hat{\rho}}}$ on the hyperangles is directly readable from the hyperspherical trees.

    \begin{theorem} \label{thm1}
        In the tree representation of hyperspherical coordinates, for every hyperangle $\gamma_i$, tracing its path to the root, label the jth node it reaches from the right $\alpha_{ij}$, and the kth node reached from the left $\beta_{ik}$, then $\boldsymbol{\dot{\hat{\rho}}} \cdot \boldsymbol{\dot{\hat{\rho}}} = \sum_i \dot{\gamma}_i^2 (\prod_j \sin^2 \alpha_{ij}) (\prod_k \cos^2 \beta_{ik})$.
    \end{theorem}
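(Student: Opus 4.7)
The plan is to induct on the size (number of nodes) of the hyperspherical tree, using the recursive structure of hyperspherical coordinates together with the observation that each subtree below any node itself parameterizes a unit vector.

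For the base case, consider a tree with a single node $\gamma$, whose two leaves are two Cartesian components. Then $\boldsymbol{\hat{\rho}} = (\cos\gamma,\sin\gamma)$ and a direct calculation gives $\boldsymbol{\dot{\hat{\rho}}}\cdot\boldsymbol{\dot{\hat{\rho}}} = \dot\gamma^2$. Since $\gamma$ is the root, it is reached from neither left nor right along its (empty) path, so both products in the claimed formula are empty and equal to $1$, matching $\dot\gamma^2$.

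For the inductive step, let $\gamma_*$ be the root of the full tree, with left subtree $T_L$ and right subtree $T_R$. Stack the Cartesian components so that $\boldsymbol{\hat{\rho}} = (\cos\gamma_*\,\boldsymbol{\hat{u}}_L,\;\sin\gamma_*\,\boldsymbol{\hat{u}}_R)$, where $\boldsymbol{\hat{u}}_L$ and $\boldsymbol{\hat{u}}_R$ are the unit vectors built by the usual sine/cosine-along-path rule applied to the subtrees $T_L$ and $T_R$ alone. The fact that $\|\boldsymbol{\hat{u}}_L\|=\|\boldsymbol{\hat{u}}_R\|=1$ follows by the same induction, via the telescoping $\sin^2+\cos^2=1$ applied recursively at every node. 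Differentiating and expanding the norm squared, I get
\begin{align*}
\boldsymbol{\dot{\hat{\rho}}}\cdot\boldsymbol{\dot{\hat{\rho}}}
&= \dot\gamma_*^2(\sin^2\gamma_*\|\boldsymbol{\hat{u}}_L\|^2+\cos^2\gamma_*\|\boldsymbol{\hat{u}}_R\|^2) \\
&\quad + \cos^2\gamma_*\,\boldsymbol{\dot{\hat{u}}}_L\cdot\boldsymbol{\dot{\hat{u}}}_L + \sin^2\gamma_*\,\boldsymbol{\dot{\hat{u}}}_R\cdot\boldsymbol{\dot{\hat{u}}}_R \\
&\quad + 2\sin\gamma_*\cos\gamma_*\dot\gamma_*\bigl(-\boldsymbol{\hat{u}}_L\cdot\boldsymbol{\dot{\hat{u}}}_L + \boldsymbol{\hat{u}}_R\cdot\boldsymbol{\dot{\hat{u}}}_R\bigr).
\end{align*}
The cross terms vanish because differentiating $\|\boldsymbol{\hat{u}}_{L,R}\|^2 = 1$ yields $\boldsymbol{\hat{u}}_{L,R}\cdot\boldsymbol{\dot{\hat{u}}}_{L,R}=0$. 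The first line collapses to $\dot\gamma_*^2$ by $\sin^2+\cos^2=1$.

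It remains to identify the two surviving subtree contributions with the formula in the theorem. Applying the inductive hypothesis to $T_L$ and $T_R$ gives $\boldsymbol{\dot{\hat{u}}}_L\cdot\boldsymbol{\dot{\hat{u}}}_L$ and $\boldsymbol{\dot{\hat{u}}}_R\cdot\boldsymbol{\dot{\hat{u}}}_R$ as sums of the claimed type, but over paths that terminate at the \emph{root} of each subtree rather than at $\gamma_*$. Multiplying the $T_L$ sum by $\cos^2\gamma_*$ prepends the factor corresponding to reaching $\gamma_*$ from the \emph{left}, i.e.\ contributes exactly the missing $\cos^2\beta$ factor for each $\gamma_i\in T_L$; similarly the $\sin^2\gamma_*$ factor supplies the missing $\sin^2\alpha$ factor for every $\gamma_i\in T_R$. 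The isolated $\dot\gamma_*^2$ term accounts for $\gamma_*$ itself, whose path to the root is empty. Collecting all three contributions reproduces the right-hand side of the theorem.

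The main obstacle I expect is purely bookkeeping: verifying that, under the recursion, the left/right labels $\alpha_{ij}$ and $\beta_{ik}$ used in the subtrees $T_L,T_R$ agree with those in the full tree once the extra factor $\cos^2\gamma_*$ or $\sin^2\gamma_*$ coming from the root is absorbed. The analytic content — unit-vector subtrees, orthogonality of a unit vector and its time derivative, and the Pythagorean identity — is elementary.
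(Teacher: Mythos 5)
Your proof is correct and follows essentially the same route as the paper's: structural induction on the tree, writing $\boldsymbol{\hat{\rho}}$ as a stacked pair scaled by $\cos\gamma_*$ and $\sin\gamma_*$, and killing the cross terms via the orthogonality $\boldsymbol{\hat{u}}\cdot\boldsymbol{\dot{\hat{u}}}=0$ that comes from differentiating the unit norm. The only cosmetic difference is that you handle the general two-subtree case in a single stroke (implicitly allowing a degenerate single-leaf branch), whereas the paper first works out the all-right and all-left chain cases separately before the general joining step.
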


    The proof is by induction.

    \begin{proof}
    
        For the simplest tree possible with only the root and two leaves, the theorem holds true. Explicitly,
        \begin{equation*}
            \boldsymbol{\hat{\rho}} =  \begin{pmatrix}
                \sin \gamma \\
                \cos \gamma
            \end{pmatrix},
            \hspace{2cm}
            \boldsymbol{\dot{\hat{\rho}}} =  \begin{pmatrix}
                \dot{\gamma} \cos \gamma \\
                - \dot{\gamma} \sin \gamma
            \end{pmatrix},
            \hspace{2cm}
            \boldsymbol{\dot{\hat{\rho}}} \cdot \boldsymbol{\dot{\hat{\rho}}} = \dot{\gamma}^2.
        \end{equation*}

        Assume the theorem holds true for trees with all $N - 1$ nodes on the right branch and add the $N$th node as the new root joining the old tree (now as a branch) with a bare branch on the left, such as shown in FIG. \ref{fig:tree_1}a, then
        \begin{equation*}
            \boldsymbol{\hat{\rho}}_{N+1} =  \begin{pmatrix}
                \boldsymbol{\hat{\rho}}_N \sin \alpha_N \\
                \cos \alpha_N
            \end{pmatrix},
            \hspace{2cm}
            \boldsymbol{\dot{\hat{\rho}}}_{N+1} =  \begin{pmatrix}
                \dot{\alpha}_N \boldsymbol{\hat{\rho}}_{N} \cos \alpha_N + \boldsymbol{\dot{\hat{\rho}}}_N \sin \alpha_N \\
                - \dot{\alpha}_N \sin \alpha_N
            \end{pmatrix},
        \end{equation*}
        and
        \begin{equation*}
             \boldsymbol{\dot{\hat{\rho}}}_{N+1} \cdot \boldsymbol{\dot{\hat{\rho}}}_{N+1} = \dot{\alpha_N}^2 + \boldsymbol{\dot{\hat{\rho}}}_N \cdot \boldsymbol{\dot{\hat{\rho}}}_N \hspace{1mm }\sin^2 \alpha_N,
        \end{equation*}
        which has the desired form.
    
        The same applies to trees with nodes all on the left branch, only this time
        \begin{equation*}
            \boldsymbol{\hat{\rho}}_{N+1} =  \begin{pmatrix}
                \sin \beta_N \\
                \boldsymbol{\hat{\rho}}_N \cos \beta_N \\
            \end{pmatrix},
            \hspace{2cm}
            \boldsymbol{\dot{\hat{\rho}}}_{N+1} =  \begin{pmatrix}
                \dot{\beta}_N \cos \beta_N \\
                - \dot{\beta}_N \boldsymbol{\hat{\rho}}_{N} \sin \beta_N + \boldsymbol{\dot{\hat{\rho}}}_N \hspace{1mm} \cos \beta_N
            \end{pmatrix},
        \end{equation*}
        and
        \begin{equation*}
             \boldsymbol{\dot{\hat{\rho}}}_{N+1} \cdot \boldsymbol{\dot{\hat{\rho}}}_{N+1} = \dot{\beta_N}^2 + \boldsymbol{\dot{\hat{\rho}}}_N \cdot \boldsymbol{\dot{\hat{\rho}}}_N  \hspace{1mm} \cos^2 \beta_N.
        \end{equation*}
    
        Next, for trees with nodes on both main branches - $M - 1$ on the right and $N - 1$ on the left - and they are joined by the root $\gamma$ form a tree for hyperspherical coordinates with $M+N$ leaves,
        \begin{equation*}
            \boldsymbol{\hat{\rho}}_{M+N} =  \begin{pmatrix}
                \boldsymbol{\hat{\rho}}_R \sin \gamma \\
                \boldsymbol{\hat{\rho}}_L \cos \gamma
            \end{pmatrix},
            \hspace{2cm}
            \boldsymbol{\dot{\hat{\rho}}}_{M+N} =  \begin{pmatrix}
                \dot{\gamma} \boldsymbol{\hat{\rho}}_R \cos \gamma + \boldsymbol{\dot{\hat{\rho}}}_R \sin \gamma \\
                - \dot{\gamma} \boldsymbol{\hat{\rho}}_L \sin \gamma + \boldsymbol{\dot{\hat{\rho}}}_L \cos \gamma
            \end{pmatrix},
        \end{equation*}
        and
        \begin{equation*}
             \boldsymbol{\dot{\hat{\rho}}}_{M+N} \cdot \boldsymbol{\dot{\hat{\rho}}}_{M+N} = \dot{\gamma}^2 + \boldsymbol{\dot{\hat{\rho}}}_R \cdot \boldsymbol{\dot{\hat{\rho}}}_R  \hspace{1mm} \sin^2 \gamma + \boldsymbol{\dot{\hat{\rho}}}_L \cdot \boldsymbol{\dot{\hat{\rho}}}_L \hspace{1mm} \cos^2 \gamma,
        \end{equation*}
        which, again, has the form given by the theorem. Lastly, the same proof applies to two trees of any configuration joining together through a new root.
        
    \end{proof}

\subsection{Angular momentum} \label{L}

Theorem 1 applies to all possible trees. For instance, when applied to panel (b) of Fig. \ref{fig:tree_1}, we find 

    \begin{align*}
        \Lambda^2 &= \mu^2 \rho^4 (\dot{\alpha}^2 + (\dot{\theta}_1^2 + \dot{\phi}_1^2 \sin^2 \theta_1) \sin^2 \alpha + (\dot{\theta}_2^2 +  \dot{\phi}_2^2 \sin^2 \theta_2) \cos^2 \alpha) \notag \\
        &= \mu^2 \rho^4 (\frac{d}{dt} \arctan \frac{\rho_1}{\rho_2})^2 + \mu^2 (\rho_1 \csc \alpha)^4 (\dot{\theta}_1^2 + \dot{\phi}_1^2 \sin^2 \theta_1) \sin^2 \alpha + \\
        & \hspace{0.4cm} \mu^2 (\rho_2 \sec \alpha)^4 (\dot{\theta}_2^2 + \dot{\phi}_2^2 \sin^2 \theta_2) \cos^2 \alpha \notag \\
        &= \mu^2 (\dot{\rho}_1 \rho_2 - \rho_1 \dot{\rho}_2)^2 + L_1^2 \csc^2 \alpha + L_2^2 \sec^2 \alpha = L_{2,1}^2 + L_1^2 \csc^2 \alpha + L_2^2 \sec^2 \alpha,
    \end{align*}
    after defining
    \begin{equation*}
        L_i^2 = \mu^2\rho_i^4(\dot{\theta_i}^2+\dot{\phi}_i^2 \sin^2 \theta_i),
    \end{equation*}
    as the angular momentum for the two virtual bodies and
    \begin{equation*}
        \boldsymbol{\rho}_{2,1} = \begin{pmatrix}
            \rho_1 \\ \rho_2 \\ 0
        \end{pmatrix},
        \hspace{1cm}
        \boldsymbol{P}_{2,1} = \mu \boldsymbol{\dot{\rho}}_{2,1}, 
        \hspace{1cm} 
        \boldsymbol{L}_{2,1} = \boldsymbol{\rho}_{2,1} \wedge \boldsymbol{P}_{2,1},
    \end{equation*}
    as the relative angular momentum between them. Therefore, the magnitude of the grand angular momentum tensor in six dimensions can be expressed in terms of angular momenta defined in the three-dimensional space. This result can be generalized, as the following theorem shows:
    
    \begin{theorem} \label{thm2}
    
        In the tree representation of hyperspherical coordinates with every leaf part of a spherical fork, two sources of angular momentum could be identified:

        i) For each spherical fork labeled by $i$, it by itself contributes $L_i^2 = \mu^2 \rho_i^4 (\dot{\theta}_i^2 + \dot{\phi}_i^2 \sin^2 \theta_i)$.

        ii) For each additional node labeled by $\gamma$, connected to forks $\gamma_{l_1}$, $\gamma_{l_2}$, ..., $\gamma_{l_m}$ on its left branch and $\gamma_{r_1}$, $\gamma_{r_2}$, ..., $\gamma_{r_n}$ on its right, define 
        \begin{equation*}
            \rho_{\gamma L} = \sqrt{\sum_{k=1}^m \rho_{\gamma_{l_k}}^2} \hspace{0.5cm} and \hspace{0.5cm} \rho_{\gamma R} = \sqrt{\sum_{k=1}^n \rho_{\gamma_{r_k}}^2},
        \end{equation*}
        then such node by itself contributes $L_\gamma^2 = L_{\gamma L, \gamma R}^2 = \mu^2 (\dot{\rho}_{\gamma R} \rho_{\gamma L} - \rho_{\gamma R} \dot{\rho}_{\gamma L})^2$.

        Tracing their paths to the root, label $\alpha_{ip}$ the pth node reached from the right by the spherical fork $i$ and $\alpha_{\gamma p}$ the pth node reached from the right by the additional node $\gamma$, and similarly $\beta_{iq}$ and $\beta_{\gamma q}$ for the qth node reached from the left, $\Lambda^2$ can be decomposed as $\Lambda^2 = \sum_i L_i^2 (\prod_p \csc^2 \alpha_{ip}) (\prod_q \sec^2 \beta_{iq}) + \sum_\gamma L_\gamma^2 (\prod_p \csc^2 \alpha_{\gamma p}) (\prod_{q} \sec^2 \beta_{\gamma q})$.      

    \end{theorem}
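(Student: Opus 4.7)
The plan is to invoke Theorem~\ref{thm1} and reorganize the resulting sum by partitioning the internal nodes of the tree into three classes dictated by the spherical-fork hypothesis: top-of-fork nodes $\theta_i$, bottom-of-fork nodes $\phi_i$, and the remaining non-fork nodes $\gamma$. Theorem~\ref{thm1} gives
\begin{equation*}
    \Lambda^2 = \mu^2 \rho^4 \sum_a \dot{\gamma}_a^2\, P_a,
\end{equation*}
where $a$ ranges over every internal node and $P_a := \prod_p \sin^2\alpha_{ap}\prod_q \cos^2\beta_{aq}$ denotes the path product obtained by tracing from $a$ up to the root. The goal is then to show that the contributions of $\theta_i$ and $\phi_i$ combine into the spherical-fork term $L_i^2/P_{\theta_i}$ and that each non-fork $\gamma$ contributes $L_\gamma^2/P_\gamma$.

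The fork step is the cleaner one. Since $\phi_i$ is the right child of $\theta_i$ by the definition of a spherical fork, tracing from $\phi_i$ upward one immediately passes $\theta_i$ from the right, so $P_{\phi_i} = \sin^2\theta_i \cdot P_{\theta_i}$. Adding the $\theta_i$ and $\phi_i$ terms of Theorem~\ref{thm1} yields $\mu^2 \rho^4 (\dot{\theta}_i^2 + \dot{\phi}_i^2 \sin^2\theta_i)\, P_{\theta_i}$. Using the auxiliary geometric identity $\rho_i^2 = \rho^2 P_{\theta_i}$, where $\rho_i$ is the magnitude of the virtual three-vector $\boldsymbol{\rho}_i$ attached to the fork, this simplifies to $L_i^2/P_{\theta_i}$, which equals $L_i^2 \prod_p \csc^2\alpha_{ip}\prod_q \sec^2\beta_{iq}$ after inversion of the path product.

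For each non-fork node $\gamma$, I would use the tree relations $\rho_{\gamma L} = \rho_\gamma \cos\gamma$ and $\rho_{\gamma R} = \rho_\gamma \sin\gamma$, so that $\tan\gamma = \rho_{\gamma R}/\rho_{\gamma L}$. Implicit differentiation then gives $\dot{\gamma}^2 = (\dot{\rho}_{\gamma R}\rho_{\gamma L} - \rho_{\gamma R}\dot{\rho}_{\gamma L})^2/\rho_\gamma^4$. Applying the same geometric identity in the form $\rho_\gamma^2 = \rho^2 P_\gamma$ and multiplying by $\mu^2 \rho^4 P_\gamma$ reduces the $\gamma$-term to $L_\gamma^2/P_\gamma$. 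Summing over all forks and non-fork nodes then assembles the full decomposition.

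The main obstacle is the auxiliary identity $\rho_a^2 = \rho^2 P_a$ for the magnitude of the vector defined by any subtree rooted at an internal node $a$. I would establish it by a short induction on the tree in the style of Theorem~\ref{thm1}: the sum of squared leaf weights inside any subtree equals $1$, because at each internal node the relation $\sin^2 + \cos^2 = 1$ propagates recursively from the leaves upward. The only residual factor in $\rho_a^2/\rho^2$ is therefore the product of squared sines and cosines accumulated while tracing from $a$ to the global root, namely $P_a$. Once this identity is in place, the remainder of the argument is a clean algebraic assembly of the pieces identified above.
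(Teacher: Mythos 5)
Your proposal is correct and follows essentially the same route as the paper: apply Theorem~\ref{thm1}, combine the $\theta_i$ and $\phi_i$ terms of each fork, use $\tan\gamma = \rho_{\gamma R}/\rho_{\gamma L}$ for the non-fork nodes, and convert the factor $\rho^4\,\prod\sin^2\prod\cos^2$ into $\rho_a^4\,\prod\csc^2\prod\sec^2$ via the subtree-radius relation (which the paper states locally as $\rho_\delta = \rho_{\delta R}\csc\delta$ and you state globally as $\rho_a^2 = \rho^2 P_a$). Your write-up is in fact more explicit than the paper's, which leaves the path-product bookkeeping and the identity $\rho_a^2 = \rho^2 P_a$ largely implicit.
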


    \begin{proof}

        i) is a direct consequence of Theorem \ref{thm1}.
        
        For ii), consider an additional node $\gamma$, there adds the new term $\mu^2 \rho_\gamma^4 \dot{\gamma}^2$ , and it has two branches - to the left $\rho_{\gamma L} = \rho_\gamma \cos \gamma$ and $\rho_{\gamma R} = \rho_\gamma \sin \gamma$ to the right, so $\gamma = \arctan (\rho_{\gamma R} / \rho_{\gamma L})$. The form of $L_\gamma$ is obtained by taking a time derivative. If a tree $\delta_R$ joins a node $\delta$ as the right branch, then the angular part in Eq.~\ref{gam} is now $\rho_\delta=\rho_{\delta R} \csc \delta$. Hence, the total scaling factor for $L_{\delta R}^2$ is $\csc^2 \delta$. From the left, the same applies, yielding $\sec^2 \delta$ as the multiplicative factor.
    \end{proof}

   For each $\gamma$, having $\csc^2 \gamma = \rho_\gamma^2 / \rho_{\gamma R}^2$ and $\sec^2 \gamma = \rho_\gamma^2 / \rho_{\gamma L}^2$, the problem is then transformed back to 3-dimensions. However, the reduction of dimensionality requires considering more angular momenta. There are $N - 1$ contributions from spherical forks, corresponding to the 3-dimensional angular momentum of the $N - 1$ virtual bodies. There are also $N - 2$ additional nodes, representing hyperangles not defined within spherical forks and having the exact structure of a Jacobi tree, describing angular momentum of relative motion between the virtual bodies. To see it more clearly, one can define $\boldsymbol{\rho}_\gamma$ and derive $\boldsymbol{P}_\gamma$ and $L_\gamma^2$ as
    \begin{equation*}
        \boldsymbol{\rho}_\gamma = \begin{pmatrix}
            \rho_{\gamma R} \\ \rho_{\gamma L} \\ 0
        \end{pmatrix},
        \hspace{1cm}
        \boldsymbol{P}_\gamma = \mu  \boldsymbol{\dot{\rho}}_\gamma,
        \hspace{1cm}
        \boldsymbol{L}_\gamma^2 = (\boldsymbol{\rho}_\gamma \wedge \boldsymbol{P}_\gamma)^2 = \mu^2 (\dot{\rho}_{\gamma R} \rho_{\gamma L} - \rho_{\gamma R} \dot{\rho}_{\gamma L})^2,
    \end{equation*}

    For example, the Jacobi tree in panel (b) of Fig. \ref{fig:4-body} can be extended to become a hyperspherical tree by removing the solid circles representing the physical bodies and appending to each virtual body a spherical fork, producing Fig. \ref{fig:3}, to which Theorem \ref{thm2} applies. Then, one can reads directly from it
    \begin{equation}
        \Lambda^2 = L_{32,1}^2 + L_1^2 \csc^2 \gamma_{32,1} + (L_{3,2}^2 + L_3^2 \sec^2 \gamma_{3,2} + L_2^2 \csc^2 \gamma_{3,2}) \sec^2 \gamma_{32,1}.
    \end{equation}
    The expression only contains quantities in 3-dimensional space, but is a sum of $2 \cross 4 - 3 = 5$ angular momenta with different scaling factors.

    It is also worth noting that one could obtain $L_i$'s and $L_\gamma$'s in the unweighted coordinates as well, further confirming the physical interpretation that they are the individual angular momenta of the virtual bodies and the relative angular momenta between them.

    In summary, it is possible to work with one grand angular momentum in $3N-3$ dimensions or with $2N-3$ angular momenta in 3-dimensional space.

    \begin{figure}
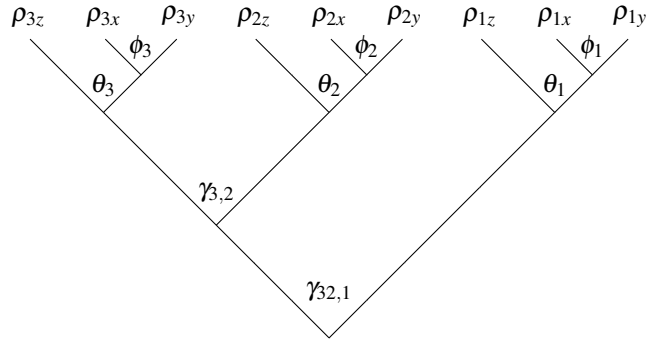

        \centering
            \begin{istgame}
                \setistgrowdirection'{north}
                \setistNullNodeStyle{0pt}
                \xtShowTerminalNodes
                \xtdistance{5mm}{10mm}
                \istroot(0)[null node]<[yshift=10pt]>{$\gamma_{32,1}$}
                    \istb \istb
                \endist
                \istroot(01)(0-1)[null node]{}
                    \istb \istbm
                \endist
                \istroot(02)(0-2)[null node]{}
                    \istbm \istb
                \endist
                \istroot(1)(01-1)[null node]{}
                    \istb \istbm
                \endist
                \xtdistance{10mm}{20mm}
                \istroot(2)(02-2)[null node]{}
                    \istbm \istb
                \endist
                \xtdistance{15mm}{30mm}
                \istroot(11)(1-1)[null node]<[yshift=5pt]>{$\gamma_{3,2}$}
                    \istb \istb
                \endist
                \xtdistance{10mm}{20mm}
                \istroot(22)(2-2)[null node]{}
                    \istbm \istb
                \endist
                \xtdistance{5mm}{10mm}
                \istroot(111)(11-1)[null node]<[yshift=3pt]>{$\theta_3$}
                    \istb \istb
                \endist
                \istroot(112)(11-2)[null node]<[yshift=3pt]>{$\theta_2$}
                    \istb \istb
                \endist
                \istroot(1111)(111-1)[null node]{}
                    \istb{}[al]{\rho_{3z}} \istbm
                \endist
                \istroot(1112)(111-2)[null node]<[yshift=3pt]>{$\phi_3$}
                    \istb{}[al]{\rho_{3x}} \istb{}[al]{\rho_{3y}}
                \endist
                \istroot(1121)(112-1)[null node]{}
                    \istb{}[al]{\rho_{2z}} \istbm
                \endist
                \istroot(1122)(112-2)[null node]<[yshift=3pt]>{$\phi_2$}
                    \istb{}[al]{\rho_{2x}} \istb{}[al]{\rho_{2y}}
                \endist
                \istroot(222)(22-2)[null node]<[yshift=3pt]>{$\theta_1$}
                    \istb \istb
                \endist
                \istroot(2221)(222-1)[null node]{}
                    \istb{}[al]{\rho_{1z}} \istbm
                \endist
                \istroot(2222)(222-2)[null node]<[yshift=3pt]>{$\phi_1$}
                    \istb{}[al]{\rho_{1x}} \istb{}[al]{\rho_{1y}}
                \endist
            \end{istgame}
        \caption{Hyperspherical tree associated with panel (c) Fig. \ref{fig:4-body}. $\boldsymbol{\rho}_1$ characterizes the motion of $\mu_{12,34}$, $\boldsymbol{\rho}_2$ that of $\mu_{3,4}$, and $\boldsymbol{\rho}_3$ that of $\mu_{1,2}$}
        \label{fig:3}
    \end{figure}

    \subsection{Scattering angle}

    It is well-known that angular momentum generates a repulsive barrier responsible for the stability of the motion, or in the case of scattering problems, it provides the reaction barrier for chemical processes. In the case of more than two bodies, the barrier appears as well in the dynamics, but it is produced due to the grand angular momentum as
    \begin{equation}
        E = \frac{\boldsymbol{P}^2}{2 \mu} + V(\boldsymbol{\rho}) = \frac{1}{2} \mu \dot{\rho}^2 + \frac{\Lambda^2}{2 \mu \rho^2} + V(\boldsymbol{\rho}),
    \end{equation}
    and it has the same effect of avoiding any close approach $\rho \rightarrow 0$ when $\Lambda^2 \neq 0$.
    
    The impact parameter $\boldsymbol{b}$ essential in all calculations of scattering and reactive processes \cite{Child1974} also deserves a generalization - as the projection of initial position $\boldsymbol{\rho}_0$ onto the hyperplane perpendicular to the initial momentum vector $\boldsymbol{P}_0$. For $N = 2$, $\boldsymbol{b}$ is reduced to a single number whereas for general $N$, $\boldsymbol{b}$ is a vector living in a $(3N - 4)$-dimensional hyperplane \cite{PR2020, Greene2017}. With this definition of $\boldsymbol{b}$, the initial grand angular momentum tensor has initial magnitude \cite{PR2021}
    \begin{equation}
         \Lambda_0^2 = (\boldsymbol{\rho}_0 \wedge \boldsymbol{P}_0)^2 = (\boldsymbol{b} \wedge \boldsymbol{P}_0)^2 = \boldsymbol{b}^2 \boldsymbol{P}_0^2 - (\boldsymbol{b} \cdot \boldsymbol{P}_0)^2 = 2\mu E b^2.
    \end{equation}

    When $V(\boldsymbol{\rho})=V(\rho)$ is independent of the hyperangles, or when the angular degrees of freedom can be integrated over to produce an effective potential
    \begin{equation*}
        V_{eff}(\rho)=\frac{\int V(\rho,\boldsymbol{\Omega})d\boldsymbol{\Omega}}{\int d\boldsymbol{\Omega}},
    \end{equation*}
    where $\boldsymbol{\Omega}$ represents all angular degrees of freedom, $\Lambda^2$ can be regarded as effectively conserved with magnitude $2\mu E b^2$. Then, from the above relations, we find
    \begin{equation} \label{diffeq}
        \boldsymbol{\hat{\rho}}' \cdot \boldsymbol{\hat{\rho}}' = \frac{d\boldsymbol{\hat{\rho}}}{d\rho} \cdot \frac{d\boldsymbol{\hat{\rho}}}{d\rho} = \frac{\boldsymbol{\dot{\hat{\rho}}} \cdot \boldsymbol{\dot{\hat{\rho}}}}{\dot{\rho}^2} = \frac{b^2}{\rho^4 (1-b^2/\rho^2 - V(\rho)/E)},
    \end{equation}
    where $\boldsymbol{\hat{\rho}}' \cdot \boldsymbol{\hat{\rho}}'$ could be read directly from the tree as stated in Theorem \ref{thm1} with a simple replacement of $d/dt$ by $d/d\rho$.

    Similarly, it is possible to transform the above expression using the chain rule of second derivatives to arrive at a second-order expression
    \begin{equation} \label{2nd-orderdiffeq}
        \boldsymbol{\hat{\rho}} \cdot \frac{d^2 \boldsymbol{\hat{\rho}}}{d\rho^2} = - \frac{b^2}{\rho^4 (1-b^2/\rho^2 - V(\rho)/E)}.
    \end{equation}
   Eq.~(\ref{diffeq}) resembles the expression for the scattering angle in the two-body problem. Hence, it can be interpreted as the scattering angle equation for the N-body problem. 
    
\section{Conclusion} \label{concl}

Using hyperspherical coordinates, we have developed a graphical approach for treating classical grand angular momentum in the N-body problem. Similarly, we have been able to demonstrate the relationship between individual and relative angular momenta of the bodies and the grand angular momentum $\Lambda$, which is essential to understand the N-body problem. Specifically, it has been shown that the exact form of $\Lambda^2$ is directly readable from the associated graphics, that the magnitude of higher-dimensional grand angular momentum can be decomposed into magnitudes of three-dimensional angular momenta. In deriving these results, the only assumption made on the potential is that it depends only on interparticle distances, making them widely applicable and a valuable asset in analysis. 

At the same time, our results have led us to a formulation of the scattering angle equation in Eq.~\ref{diffeq} for the N-body problem as a generalization of the well-known expression in two-body scattering. This result establishes a link between our knowledge of two-body physics and how it could be generalizable to the N-body problem. 
    
\backmatter

\bmhead{Acknowledgments}

The authors acknowledge the support of the Office of the Vice President for Research (OVPR) at Stony Brook University through the seed grant program.

\section*{Declarations}

\begin{itemize}
%\item Conflict of interest

%Author Jes\'us P\'erez-R\'ios is on the Editorial Board of Few Body Systems

\item Data availability 

No data were created or analyzed in this study.

\item Author contribution

Zhongqi Liang: Conceptualization (equal); Formal analysis (lead); Investigation (equal); Visualization (lead); Writing – original draft (equal); Writing – review \& editing (equal). Jes\'us P\'erez-R\'ios: Conceptualization (equal); Investigation (equal); Supervision (lead).

\end{itemize}

\bibliography{bibliography}% common bib file

%% BioMed_Central_Bib_Style_v1.01

\begin{thebibliography}{23}
% BibTex style file: bmc-mathphys.bst (version 2.1), 2014-07-24
\ifx \bisbn   \undefined \def \bisbn  #1{ISBN #1}\fi
\ifx \binits  \undefined \def \binits#1{#1}\fi
\ifx \bauthor  \undefined \def \bauthor#1{#1}\fi
\ifx \batitle  \undefined \def \batitle#1{#1}\fi
\ifx \bjtitle  \undefined \def \bjtitle#1{#1}\fi
\ifx \bvolume  \undefined \def \bvolume#1{\textbf{#1}}\fi
\ifx \byear  \undefined \def \byear#1{#1}\fi
\ifx \bissue  \undefined \def \bissue#1{#1}\fi
\ifx \bfpage  \undefined \def \bfpage#1{#1}\fi
\ifx \blpage  \undefined \def \blpage #1{#1}\fi
\ifx \burl  \undefined \def \burl#1{\textsf{#1}}\fi
\ifx \doiurl  \undefined \def \doiurl#1{\url{https://doi.org/#1}}\fi
\ifx \betal  \undefined \def \betal{\textit{et al.}}\fi
\ifx \binstitute  \undefined \def \binstitute#1{#1}\fi
\ifx \binstitutionaled  \undefined \def \binstitutionaled#1{#1}\fi
\ifx \bctitle  \undefined \def \bctitle#1{#1}\fi
\ifx \beditor  \undefined \def \beditor#1{#1}\fi
\ifx \bpublisher  \undefined \def \bpublisher#1{#1}\fi
\ifx \bbtitle  \undefined \def \bbtitle#1{#1}\fi
\ifx \bedition  \undefined \def \bedition#1{#1}\fi
\ifx \bseriesno  \undefined \def \bseriesno#1{#1}\fi
\ifx \blocation  \undefined \def \blocation#1{#1}\fi
\ifx \bsertitle  \undefined \def \bsertitle#1{#1}\fi
\ifx \bsnm \undefined \def \bsnm#1{#1}\fi
\ifx \bsuffix \undefined \def \bsuffix#1{#1}\fi
\ifx \bparticle \undefined \def \bparticle#1{#1}\fi
\ifx \barticle \undefined \def \barticle#1{#1}\fi
\bibcommenthead
\ifx \bconfdate \undefined \def \bconfdate #1{#1}\fi
\ifx \botherref \undefined \def \botherref #1{#1}\fi
\ifx \url \undefined \def \url#1{\textsf{#1}}\fi
\ifx \bchapter \undefined \def \bchapter#1{#1}\fi
\ifx \bbook \undefined \def \bbook#1{#1}\fi
\ifx \bcomment \undefined \def \bcomment#1{#1}\fi
\ifx \oauthor \undefined \def \oauthor#1{#1}\fi
\ifx \citeauthoryear \undefined \def \citeauthoryear#1{#1}\fi
\ifx \endbibitem  \undefined \def \endbibitem {}\fi
\ifx \bconflocation  \undefined \def \bconflocation#1{#1}\fi
\ifx \arxivurl  \undefined \def \arxivurl#1{\textsf{#1}}\fi
\csname PreBibitemsHook\endcsname

%%% 1
\bibitem[\protect\citeauthoryear{Bannikova and Capaccioli}{2022}]{Bann2022}
\begin{bbook}
\bauthor{\bsnm{Bannikova}, \binits{E.}},
\bauthor{\bsnm{Capaccioli}, \binits{M.}}:
\bbtitle{Foundations of Celestial Mechanics}.
\bpublisher{Springer},
\blocation{Cham, Switzerland}
(\byear{2022}).
\doiurl{10.1007/978-3-031-04576-9}
\end{bbook}
\endbibitem

%%% 2
\bibitem[\protect\citeauthoryear{Arnold et~al.}{2006}]{Arnold2006}
\begin{bbook}
\bauthor{\bsnm{Arnold}, \binits{V.I.}},
\bauthor{\bsnm{Kozlov}, \binits{V.V.}},
\bauthor{\bsnm{Neishtadt}, \binits{A.I.}}:
\bbtitle{Mathematical Aspects of Classical and Celestial Mechanics}.
\bpublisher{Springer},
\blocation{Berlin, Germany}
(\byear{2006}).
\doiurl{10.1007/978-3-540-48926-9}
\end{bbook}
\endbibitem

%%% 3
\bibitem[\protect\citeauthoryear{Thomas}{1935}]{Thomas}
\begin{barticle}
\bauthor{\bsnm{Thomas}, \binits{L.H.}}:
\batitle{The interaction between a neutron and a proton and the structure of ${\mathrm{h}}^{3}$}.
\bjtitle{Phys. Rev.}
\bvolume{47},
\bfpage{903}--\blpage{909}
(\byear{1935})
\doiurl{10.1103/PhysRev.47.903}
\end{barticle}
\endbibitem

%%% 4
\bibitem[\protect\citeauthoryear{Faddeev}{1965}]{Faddeev1965}
\begin{bbook}
\bauthor{\bsnm{Faddeev}, \binits{L.D.}}:
\bbtitle{Mathematical Aspects of the Three-body Problem in the Quantum Scattering Theory}.
\bpublisher{Israel Program for Scientific Translations},
\blocation{Jerusalem, Israel}
(\byear{1965})
\end{bbook}
\endbibitem

%%% 5
\bibitem[\protect\citeauthoryear{Greene et~al.}{2017}]{Greene2017}
\begin{barticle}
\bauthor{\bsnm{Greene}, \binits{C.H.}},
\bauthor{\bsnm{Giannakeas}, \binits{P.}},
\bauthor{\bsnm{P\'erez-R\'{\i}os}, \binits{J.}}:
\batitle{Universal few-body physics and cluster formation}.
\bjtitle{Rev. Mod. Phys.}
\bvolume{89},
\bfpage{035006}
(\byear{2017})
\doiurl{10.1103/RevModPhys.89.035006}
\end{barticle}
\endbibitem

%%% 6
\bibitem[\protect\citeauthoryear{P{\'e}rez-R{\'\i}os et~al.}{2014}]{PR2014}
\begin{barticle}
\bauthor{\bsnm{P{\'e}rez-R{\'\i}os}, \binits{J.}},
\bauthor{\bsnm{Ragole}, \binits{S.}},
\bauthor{\bsnm{Wang}, \binits{J.}},
\bauthor{\bsnm{Greene}, \binits{C.H.}}:
\batitle{Comparison of classical and quantal calculations of helium three-body recombination}.
\bjtitle{The Journal of Chemical Physics}
\bvolume{140}(\bissue{4}),
\bfpage{044307}
(\byear{2014})
\doiurl{10.1063/1.4861851}
\end{barticle}
\endbibitem

%%% 7
\bibitem[\protect\citeauthoryear{Pérez-Ríos}{2020}]{PR2020}
\begin{bbook}
\bauthor{\bsnm{Pérez-Ríos}, \binits{J.}}:
\bbtitle{An Introduction to Cold and Ultracold Chemistry: Atoms, Molecules, Ions and Rydbergs}.
\bpublisher{Springer},
\blocation{Cham, Switzerland}
(\byear{2020}).
\doiurl{10.1007/978-3-030-55936-6}
\end{bbook}
\endbibitem

%%% 8
\bibitem[\protect\citeauthoryear{Xiao et~al.}{2013}]{Xiao2013}
\begin{barticle}
\bauthor{\bsnm{Xiao}, \binits{S.}},
\bauthor{\bsnm{Deng}, \binits{Z.}},
\bauthor{\bsnm{Chen}, \binits{H.}}:
\batitle{Independent bases on the spatial wavefunction of four-identical-particle systems}.
\bjtitle{J. Math. Phys.}
\bvolume{54},
\bfpage{122105}
(\byear{2013})
\doiurl{10.1063/1.4835595}
\end{barticle}
\endbibitem

%%% 9
\bibitem[\protect\citeauthoryear{Xiao et~al.}{2015}]{Xiao2015}
\begin{barticle}
\bauthor{\bsnm{Xiao}, \binits{S.}},
\bauthor{\bsnm{Mu}, \binits{X.}},
\bauthor{\bsnm{Deng}, \binits{Z.}},
\bauthor{\bsnm{Chen}, \binits{H.}}:
\batitle{Five-body {M}oshinsky brackets}.
\bjtitle{J. Math. Phys.}
\bvolume{56},
\bfpage{042102}
(\byear{2015})
\doiurl{10.1063/1.4918551}
\end{barticle}
\endbibitem

%%% 10
\bibitem[\protect\citeauthoryear{Lombardi et~al.}{2007}]{Lomb2007}
\begin{barticle}
\bauthor{\bsnm{Lombardi}, \binits{A.}},
\bauthor{\bsnm{Palazzetti}, \binits{F.}},
\bauthor{\bsnm{Peroncelli}, \binits{L.}},
\bauthor{\bsnm{Grossi}, \binits{G.}},
\bauthor{\bsnm{Aquilanti}, \binits{V.}},
\bauthor{\bsnm{Sevryuk}, \binits{M.B.}}:
\batitle{Few-body quantum and many-body classical hyperspherical approaches to reactions and to cluster dynamics}.
\bjtitle{Theoretical Chemistry Accounts}
\bvolume{117},
\bfpage{709}--\blpage{721}
(\byear{2007})
\doiurl{10.1007/s00214-006-0195-0}
\end{barticle}
\endbibitem

%%% 11
\bibitem[\protect\citeauthoryear{Aquilanti et~al.}{1986}]{Aqui1986}
\begin{barticle}
\bauthor{\bsnm{Aquilanti}, \binits{V.}},
\bauthor{\bsnm{Cavalli}, \binits{S.}},
\bauthor{\bsnm{Grossi}, \binits{G.G.}}:
\batitle{Hyperspherical coordinates for molecular dynamics by the method of trees and the mapping of potential energy surfaces for triatomic systems}.
\bjtitle{J. Chem. Phys.}
\bvolume{85},
\bfpage{1362}--\blpage{1375}
(\byear{1986})
\doiurl{10.1063/1.451224}
\end{barticle}
\endbibitem

%%% 12
\bibitem[\protect\citeauthoryear{Aquilanti and Cavalli}{1987}]{Aqui1987}
\begin{barticle}
\bauthor{\bsnm{Aquilanti}, \binits{V.}},
\bauthor{\bsnm{Cavalli}, \binits{S.}}:
\batitle{Hyperspherical analysis of kinetic paths for elementary chemical reactions and their angular momentum dependence}.
\bjtitle{Chemical Physics Letters}
\bvolume{141}(\bissue{4}),
\bfpage{309}--\blpage{314}
(\byear{1987})
\doiurl{10.1016/0009-2614(87)85030-3}
\end{barticle}
\endbibitem

%%% 13
\bibitem[\protect\citeauthoryear{Escobar-Ruiz and Escobar}{2018}]{EE2018}
\begin{barticle}
\bauthor{\bsnm{Escobar-Ruiz}, \binits{M.A.}},
\bauthor{\bsnm{Escobar}, \binits{C.A.}}:
\batitle{Three charges on a plane in a magnetic field: Special trajectories}.
\bjtitle{J. Math. Phys.}
\bvolume{59},
\bfpage{102901}
(\byear{2018})
\doiurl{10.1063/1.5030705}
\end{barticle}
\endbibitem

%%% 14
\bibitem[\protect\citeauthoryear{Lim}{1989}]{Lim1989}
\begin{barticle}
\bauthor{\bsnm{Lim}, \binits{C.C.}}:
\batitle{Canonical transformations and graph theory}.
\bjtitle{Physics Letters A}
\bvolume{138},
\bfpage{258}--\blpage{266}
(\byear{1989})
\doiurl{10.1016/0375-9601(89)90274-0}
\end{barticle}
\endbibitem

%%% 15
\bibitem[\protect\citeauthoryear{Lim}{1991}]{Lim1991}
\begin{barticle}
\bauthor{\bsnm{Lim}, \binits{C.C.}}:
\batitle{Binary trees, symplectic matrices and the jacobi coordinates of celestial mechanics}.
\bjtitle{Archive for Rational Mechanics and Analysis}
\bvolume{115},
\bfpage{153}--\blpage{165}
(\byear{1991})
\doiurl{10.1007/BF00375224}
\end{barticle}
\endbibitem

%%% 16
\bibitem[\protect\citeauthoryear{Lim}{2020}]{Lim2020}
\begin{barticle}
\bauthor{\bsnm{Lim}, \binits{C.C.}}:
\batitle{General jacobi coordinates and herman resonance for some nonheliocentric celestial n-body problems}.
\bjtitle{Regular and Chaotic Dynamics}
\bvolume{25},
\bfpage{149}--\blpage{165}
(\byear{2020})
\doiurl{10.1134/S1560354720020021}
\end{barticle}
\endbibitem

%%% 17
\bibitem[\protect\citeauthoryear{Smith}{1960}]{Smith1960}
\begin{barticle}
\bauthor{\bsnm{Smith}, \binits{F.T.}}:
\batitle{Generalized angular momentum in many-body collisions}.
\bjtitle{Phys. Rev.}
\bvolume{120},
\bfpage{1058}--\blpage{1069}
(\byear{1960})
\doiurl{10.1103/PhysRev.120.1058}
\end{barticle}
\endbibitem

%%% 18
\bibitem[\protect\citeauthoryear{Clapp}{1949}]{Hyperspherical_first}
\begin{barticle}
\bauthor{\bsnm{Clapp}, \binits{R.E.}}:
\batitle{The binding energy of the triton}.
\bjtitle{Phys. Rev.}
\bvolume{76},
\bfpage{873}--\blpage{874}
(\byear{1949})
\doiurl{10.1103/PhysRev.76.873.2}
\end{barticle}
\endbibitem

%%% 19
\bibitem[\protect\citeauthoryear{Efimov}{1970}]{Efimov1970}
\begin{barticle}
\bauthor{\bsnm{Efimov}, \binits{V.}}:
\batitle{Energy levels arising from resonant two-body forces in a three-body system}.
\bjtitle{Physics Letters B}
\bvolume{33}(\bissue{8}),
\bfpage{563}--\blpage{564}
(\byear{1970})
\doiurl{10.1016/0370-2693(70)90349-7}
\end{barticle}
\endbibitem

%%% 20
\bibitem[\protect\citeauthoryear{Avery and Avery}{2018}]{Avery2018}
\begin{bbook}
\bauthor{\bsnm{Avery}, \binits{J.E.}},
\bauthor{\bsnm{Avery}, \binits{J.S.}}:
\bbtitle{Hyperspherical Harmonics and Their Physical Applications}.
\bpublisher{World Scientific Publishing},
\blocation{Singapore}
(\byear{2018}).
\doiurl{10.1142/10690}
\end{bbook}
\endbibitem

%%% 21
\bibitem[\protect\citeauthoryear{Aquilanti and Tonzani}{2004}]{Aqui2004}
\begin{barticle}
\bauthor{\bsnm{Aquilanti}, \binits{V.}},
\bauthor{\bsnm{Tonzani}, \binits{S.}}:
\batitle{Three-body problem in quantum mechanics: Hyperspherical elliptic coordinates and harmonic basis sets}.
\bjtitle{J. Chem. Phys.}
\bvolume{120},
\bfpage{4066}--\blpage{4073}
(\byear{2004})
\doiurl{10.1063/1.1644098}
\end{barticle}
\endbibitem

%%% 22
\bibitem[\protect\citeauthoryear{Child}{1974}]{Child1974}
\begin{bbook}
\bauthor{\bsnm{Child}, \binits{M.S.}}:
\bbtitle{Molecular Collision Theory}.
\bpublisher{Aademic Press London and New York},
\blocation{New York, USA}
(\byear{1974})
\end{bbook}
\endbibitem

%%% 23
\bibitem[\protect\citeauthoryear{Mirahmadi and Pérez-Ríos}{2021}]{PR2021}
\begin{barticle}
\bauthor{\bsnm{Mirahmadi}, \binits{M.}},
\bauthor{\bsnm{Pérez-Ríos}, \binits{J.}}:
\batitle{Classical threshold law for the formation of van der waals molecules}.
\bjtitle{J. Chem. Phys.}
\bvolume{155},
\bfpage{094306}
(\byear{2021})
\doiurl{10.1063/5.0062812}
\end{barticle}
\endbibitem

\end{thebibliography}
%% if required, the content of .bbl file can be included here once bbl is generated
%%\input sn-article.bbl

\end{document}